\newtheorem{theorem}{Theorem}[section]
\theoremstyle{definition}
\newtheorem{definition}[theorem]{Definition}
\newtheorem{ques}[theorem]{Question}
\newtheorem{pr}[theorem]{Problem}
\newtheorem{prop}[theorem]{Proposition}
\theoremstyle{remark}
\numberwithin{equation}{section}
\DeclareMathOperator{\Span}{Span}
\DeclareMathOperator{\Wing}{Wing}
\definecolor{RED}{rgb}{1,0,0}\definecolor{BLUE}{rgb}{0,0,1} %DIF PREAMBLE
\begin{document}

% \title[short text for running head]{full title}
\title[Modeling the distribution of distance data in Euclidean space]{Modeling the distribution of \\ distance data in Euclidean space}

% Only \author and \address are required; other information is
% optional. Remove any unused author tags.

% author one information
% \author[short version for running head]{name for top of paper}
\author{Ruth Davidson}
\address{Department of Mathematics, University of Illinois Urbana-Champaign}
\curraddr{}
\email{redavid2@illinois.edu}
\thanks{ }

% author two information
\author{Joseph Rusinko}
\address{Department of Mathematics and Computer Science, Hobart and William Smith Colleges}
\curraddr{}
\email{rusinko@hws.edu}
\thanks{}

\author{Zoe Vernon}
\address{Department of Mathematics, Washington University in St. Louis}
\curraddr{}
\email{zoe.vernon@wustl.edu}
\thanks{}

\author{Jing Xi}
\address{Department of Mathematics, North Carolina State University}
\curraddr{}
\email{jxi2@ncsu.edu}
\thanks{}

\subjclass[2010]{05B25, 05C05, 46N30, 62-07, 62P10, 65C60, 92B10, 97K80}
% The 2010 edition of the Mathematics Subject Classification is
% now available. If you are citing a classification from the
% new scheme, use the following input coding instead.
%\subjclass[2010]{Primary }

\date{}

\begin{abstract}
Phylogenetic inference-the derivation of a hypothesis for the common evolutionary history of a group of species- is an active area of research at the intersection of biology, computer science, mathematics, and statistics. One assumes the data contains a phylogenetic signal that will be recovered with varying accuracy due to the quality of the method used, and the quality of the data. 

The input for distance-based inference methods is an element of a Euclidean space with coordinates indexed by the pairs of organisms. For several algorithms there exists a  subdivision of this space into polyhedral cones such that inputs in the same cone return the same tree topology.  The geometry of these cones has been used to analyze the inference algorithms.  In this chapter, we model how input data points drawn from DNA sequences are distributed throughout Euclidean space in relation to the space of tree metrics, which in turn can also be described as a collection of polyhedral cones.

\end{abstract}

\maketitle

% Text of article.

%%%%%%%%%%%%%%%%%%%%%%%%%%%%%%%%%%%%%%%%%%%%%%%%%%%%%%%%%%%%%%%%%%%%%%%%

% Templates for common elements of an article; for additional
% information, see the AMS-LaTeX instructions manual, instr-l.pdf,
% included in every AMS author package, and the amsthm user's guide,
% linked from http://www.ams.org/tex/amslatex.html .

% Section headings

\section{Introduction to Phylogenetics}\label{sec:Intro}

A \emph{phylogeny} is a mathematical model of the common evolutionary history of a group of \emph{taxa} $X$, where $X$ may be a set of genes, species, or multiple individuals sampled within a population. Phylogenies are commonly represented by a \emph{phylogenetic tree} $T$ which is a connected, acyclic graph in which the degree-one vertices, or \emph{leaves} of $T$, are labeled with the elements of $X$. The tree $T$ is a representation of the evolutionary history of the set $X$ \cite{Semple_Steel:2003}. 

In many phylogenetic reconstruction pipelines one creates a nucleotide or protein sequence alignment, uses the alignment to estimate pairwise dissimilarities between taxa, and then uses the set of dissimilarities to infer a phylogenetic tree.  One class of methods used in the third step of such a pipeline are distance-based algorithms: examples include Neighbor Joining (NJ) \cite{Saitou_Nei:1987}, Unweighted Pair Group Method with Arithmetic Mean (UPGMA) \cite{Sokal_Michener:1958, Sokal_Sneath:1963} (Algorithm \ref{UPGMA}), and FastME \cite{FastME}.  These algorithms seek an edge-weighted tree such that the pairwise distances between taxa on the tree reflect the dissimilarities computed from the data. While not as popular as maximum-likelihood methods, distance-based algorithms remain relevant for reasons beyond their speed relative to other methods.  For example, they serve as components of other popular methods, such as the species tree inference method NJst \cite{NJst} and the DNA sequence alignment method MUSCLE \protect\cite{edgar2004muscle}.

It is possible to analyze the theoretical properties of distance-based algorithms by considering which sets of input data return particular output trees. The set of input data can be viewed as a subset of Euclidean space. It was shown geometrically that NJ solves its optimization problem well in low dimension in \cite{Eickmeyer_Huggins_Pachter_Yoshida:2008}, and in \cite{Davidson_Sullivant:2013, Davidson_Sullivant:2014}, geometry provided new insights into biases in the structure of trees reconstructed from phylogenetic data \cite{Aldous}.  In this chapter we use computational tools in a geometric setting to observe that pairwise dissimilarity data computed from real sequence data will not be uniformly distributed across Euclidean space, but instead will be clustered around an object corresponding to pairwise distances between taxa induced by phylogenetic trees.  We derive models for the distribution of this data and discuss possible applications for these models and tools.  

\section{The Space of additive distance matrices embedded in $\mathbb{R}^{n \choose 2}$}\label{additive} 
\subsection{Dissimilarity maps used in distance-based phylogenetic inference as points in Euclidean space}

\begin{definition}\label{diss}
For a fixed set $X$ of $n$ taxa, a \emph{dissimilarity map} is an $n \times n$ symmetric matrix $\delta$ 
where $\delta(u,v)$ is a measure of evolutionary difference between taxa $u$ and $v$. 

\end{definition}

There are many popular ways to estimate such a measure that assume models of sequence evolution such as Jukes-Cantor (JC) \cite{Jukes_Cantor:1969}, Kimura two-parameter (K2P) \cite{Kimura:1980}, or the general time-reversible model (GTR) \cite{GTR}. For example, the JC model of DNA sequence evolution corresponds to a measure of dissimilarity between two DNA sequences of the same length representing taxa $u$ and $v$: 

\begin{equation}\label{JCCorrection}
\delta(u,v) = -\frac{3}{4}\log\left(1-\frac{4}{3} \left(\frac{h}{l}\right)\right)
\end{equation}

where $h$ is the number of places that the sequences differ and $l$ is the shared length of the two sequences. Equation \ref{JCCorrection} is known as the \emph{Jukes-Cantor Correction} because it returns a measure of dissimilarity between taxa based on the expected number of mutations given by the JC model of sequence evolution instead of the observed number of differences in the sequences.

Dissimilarity maps are the inputs to distance-based algorithms. By our definition,  $\delta(u,v) = \delta(v,u)$ for all pairs $\{u, v\} \subset X$.  Furthermore, $\delta(u,u) = 0$ because $\delta$ is a measure of evolutionary distance. So each dissimilarity map $\delta$ is also a vector in the Euclidean space $\mathbb{R}^{n \choose 2}$ with coordinates labeled by all distinct pairs of taxa. Finally, we can assume $\delta$ is in the positive orthant, which we denote $\mathbb{R}^{n \choose 2}_{\geq 0}$; as we see in Equation \ref{JCCorrection}, popular dissimilarity measures such as JC, K2P, and GTR return positive numbers as estimates of evolutionary difference. 

The output of a distance-based algorithm is an additive distance matrix (sometimes called a tree metric, though this term has more than one usage in the literature).  An \emph{additive distance matrix}, which we will denote as $d(u,v)$ or simply $d$, can be realized by a phylogenetic tree $T$ with edge weights that are real numbers, where $d(u,v)$ for two distinct elements $u$ and $v$ of $X$ is the sum of the weights of the edges in the unique path between $u$ and $v$ in $T$.  \emph{Ultrametric} additive distance matrices can be realized by a tree with a root vertex $r$ that is not labeled with a element of $X$, and satisfy the property that for every two elements $u$ and $v$ of $X$, $d(u,r) = d(v,r)$. Figure \ref{fig:ultrametric} shows an ultrametric on three taxa.

\begin{figure}[ht!]
\centering

\includegraphics[width=4cm]{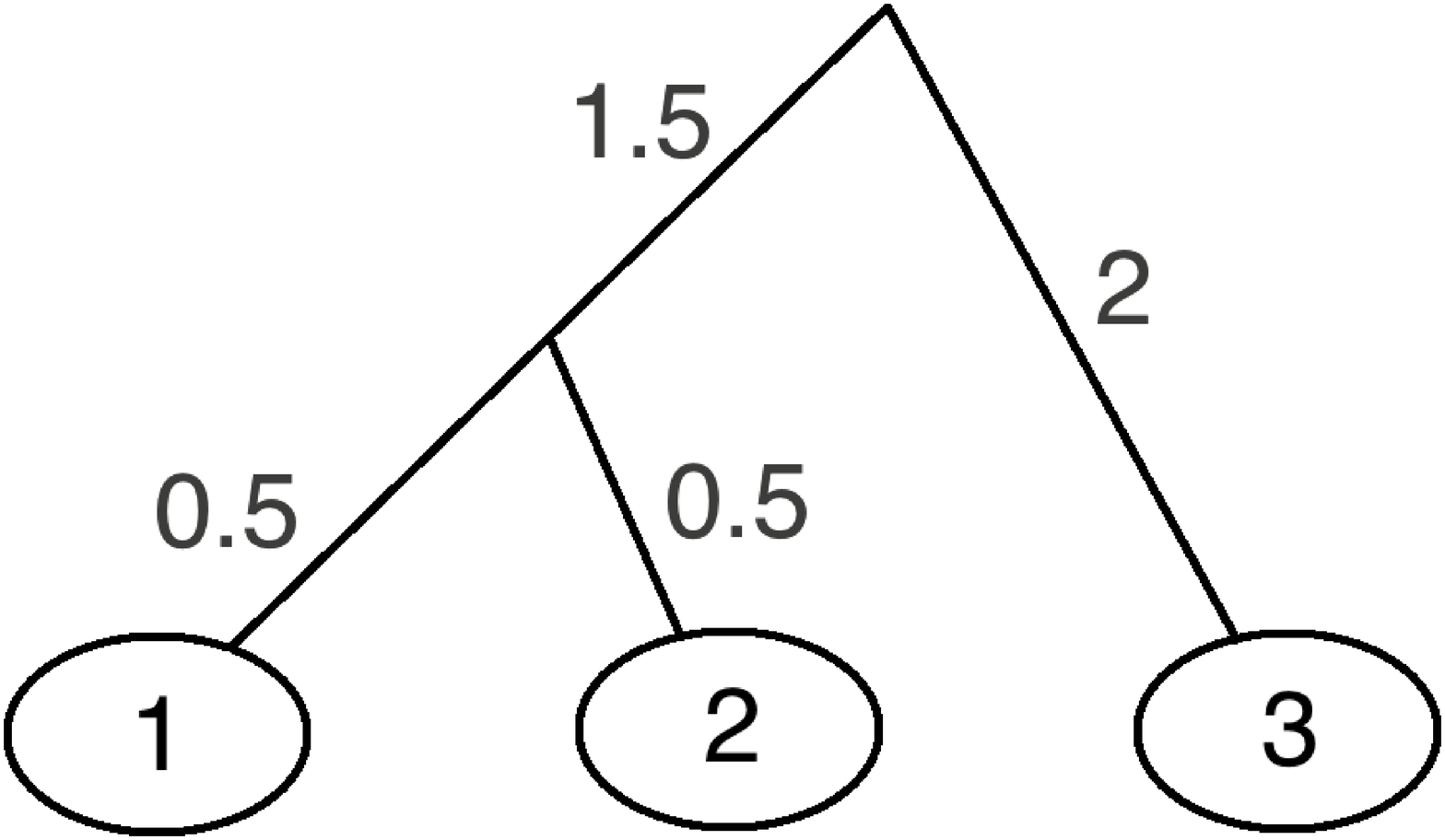}

\caption{An ultrametric on three taxa}
\label{fig:ultrametric}
\end{figure}

\subsection{Polyhedral descriptions of the outputs of distance-based methods as points in Euclidean space}

Recall that a \emph{polyhedral cone} is the nonnegative span of a set of vectors in $\mathbb{R}^{d}$ for $d \geq 1$, commonly referred to as the $\mathcal{V}$-representation.  Since polyhedral cones are polyhedra, they also have an $\mathcal{H}$-representation as the closed (in the traditional sense of the usual topology on $\mathbb{R}^{d}$) intersection of a finite number of half-spaces defined by hyperplanes. A \emph{face} of a polyhedral cone is the intersection of the cone with a hyperplane that defines a half-space in $\mathbb{R}^{d}$ that entirely contains the cone.  See \cite{ziegler} for a comprehensive introduction to polyhedral theory.

\begin{definition}
\label{fan_defn}
  A \emph{fan} is a family $\mathcal{F}$  of polyhedral cones in $\mathbb{R}^{d}$ such that:
        \begin{enumerate}
        \item if $P \in \mathcal{F}$ then every nonempty face of $P$ is in $\mathcal{F}$, and 
        \item if $P_{1}, P_{2} \in \mathcal{F}$ then $P_{1} \cap P_{2}$ is a face of both $P_{1}$ and $P_{2}$.
        \end{enumerate}
        
\end{definition}

Informally, polyhedral fans are special collections of cones that are easy to work with because they ``play well together." The set of all additive distance matrices on $n$ taxa,  denoted $\mathcal{T}_{n}$, and the set of ultrametric distance matrices on $n$ taxa, denoted $\mathcal{ET}_{n}$, are each classified as both a polyhedral fan and a tropical variety. These spaces are studied in \cite{Ardila_Klivans:2006} and \cite{Speyer_Sturmfels:2004}, respectively. We follow the notation of \cite{Davidson_Sullivant:2014} to describe these spaces and note that $\mathcal{ET}_{n} \subset \mathcal{T}_{n} \subset \mathbb{R}^{n \choose 2}$ for all $n \geq 3$.  In other words, the spaces of additive matrices  $\mathcal{ET}_{n}$ and $\mathcal{T}_{n}$ are \emph{embedded} in the Euclidean space of dissimilarity maps in a very natural way.  

In this chapter we model how three-taxon samples of distance data  drawn from dissimilarity maps on $n$ taxa are distributed in relation to $\mathcal{ET}_{3}$. We model this data in relation to $\mathcal{ET}_{3}$ rather than $\mathcal{T}_{3}$ because (1) most of the restrictions of the samples to three taxa already lie in $\mathcal{T}_{3}$ and (2) the condition of ultrametricity is of particular interest to biologists as it indicates that the molecular clock assumption is valid on the sample, which means that evolution occurred at a constant rate over time throughout the entire phylogeny.

We write the restriction of a dissimilarity map to a three-taxon sample $\delta$ as a point $(x,y,z) = (\delta(u,v), \delta(u,w), \delta(v,w))$ in $\mathbb{R}^{3 \choose 2} = \mathbb{R}^{3}$. We wish to understand geometric properties of the distribution of points $\delta$ in relation to $\mathcal{ET}_{3}$.

The space $\mathcal{ET}_{3}$ (see Figure \ref{fig:ET3}) is a two-dimensional polyhedral fan consisting of three two-dimensional cones
$$
\Span_{\ge 0} \{ (1,1,1), (0,1,1) \}, \  \Span_{\ge 0} \{ (1,1,1), (1,0,1) \}, $$ and
$$
\Span_{\ge 0} \{ (1,1,1), (1,1,0) \},
$$
whose pairwise intersections consist solely of the one-dimensional cone $\Span_{\ge 0} \{ (1,1,1)\}$. We refer to the two-dimensional (top-dimensional) cones in the fan $\mathcal{ET}_{3}$ as the \emph{wings}, where $$\Wing \ 1 = \Span_{\geq 0} \{ (1,1,1), (0,1,1) \},$$ $$\Wing \ 2 = \Span_{\geq 0} \{ (1,1,1), (1,0,1) \},$$ and $$\Wing \ 3 = \Span_{\geq 0} \{ (1,1,1), (1,1,0) \}.$$ We refer to the one-dimensional cone in $\mathcal{ET}_{3}$ $\Span_{\ge 0} \{ (1,1,1)\}$ as the \emph{spindle}. 

\begin{figure}[h!]
\centering

\includegraphics[width=0.3\textwidth]{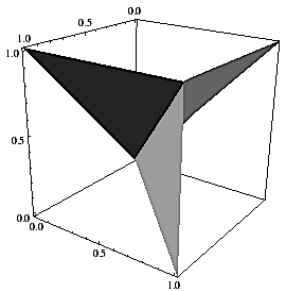}

\caption{Ultrametric outputs $\mathcal{ET}_{3}$}
\label{fig:ET3}
\end{figure}

We introduce a new set of coordinates which describe a dissimilarity map in relation to the geometry of $\mathcal{ET}_{3}$.
Let $\delta$ be a dissimilarity map. The relationship of $\delta$ to $\mathcal{ET}_{3}$ can be described by (1) $W$, the nearest wing to $\delta$, (2) $D_{1}$, the signed distance from $\delta$ to the nearest wing, (3) $D_{2}$, the distance from the point on the nearest wing to the spindle, and (4) $D_{3}$, the distance from the point on the spindle to the origin $(0,0,0)$. 
Then $\delta$ can be written as a sum of three pairwise-orthogonal vectors with lengths $D_{1}, \ D_{2}$, and $D_{3} $ representing a path from $\delta$ to the origin that begins with a choice of the nearest wing. We call the vector $(W, \ D_{1}, \ D_{2}, \ D_{3} )$ the \emph{path trace} of $\delta$, where $W = \Wing \ i , \ i \in \{1, 2, 3 \}$. The path trace is an equivalent representation of $\delta$ as $\delta$ can be uniquely determined by the path trace.

When analyzing a distance-based algorithm, one should emphasize dissimilarity maps likely to arise as outputs of a distance-approximation method based on a popular statistical model of sequence evolution such as JC, K2P, and GTR, which were introduced in Section \ref{sec:Intro}. In the following section we will review distance-based methods and see that they also induce polyhedral structures; studying these structures has shed light on the behavior of the algorithms themselves.

\section{Geometry of distance-based methods}
\subsection{Optimization problems associated to distance-based methods}
\begin{pr}\label{LSP}The Least-Squares Phylogeny problem (LSP) asks, for a given dissimilarity map $\delta \in \mathbb{R}^{n \choose 2}$, 
what is the additive distance matrix $d$ that minimizes the ordinary Euclidean distance given by the formula
$$
 \sqrt{\sum_{x,y \in X} 
(\delta(x,y) - d(x,y))^{2} }. 
$$

\end{pr}

The distance-based methods Unweighted Pair Group Method with Arithmetic Mean (UPGMA)  \cite{Sokal_Michener:1958, Sokal_Sneath:1963} and Neighbor-Joining (NJ) \cite{Saitou_Nei:1987} are approximations to LSP \cite{Hosten,Saitou_Nei:1987}.  However, an alternative interpretation of NJ \cite{GascuelSteel, DesperGascuel} is that NJ performs a heuristic search, guided by a linear transformation of $\delta$ known as the \emph{$Q$-criterion} at each agglomeration step, that minimizes a tree-length estimate due to Yves Pauplin \cite{Pauplin} known as the ``Balanced Minimum Evolution" (BME) criterion.

Day showed that both Problem \ref{LSP} and finding the tree minimizing the BME criterion are NP-hard \cite{Day1987}.  Thus polynomial-time, heuristic, distance-based algorithms such as BIONJ \cite{BIONJ}, Weighbor \cite{Weighbor}, and FastME \cite{FastME} remain essential for solving Problem \ref{LSP} as well as for finding a tree minimizing the BME criterion. These algorithms take dissimilarity maps as inputs,
may outperform NJ in terms of topological accuracy under certain conditions, and exhibit superior immunity to reconstruction pathologies well-known to biologists such as long-branch attraction \cite{LongBranch}. 

 \subsection{Polyhedral decompositions of $\mathbb{R}^{n \choose 2}$ induced by distance-based algorithms}
 
Given a distance-based algorithm, there exists a closed-form description of the input space for a phylogenetic reconstruction method due to the natural subdivision of said space induced by the decision criteria of the method. This subdivision consists of regions containing dissimilarity maps that return additive distance matrices realized by the same combinatorial type of tree upon application of the method. For example, the distance-based algorithms NJ and UPGMA induce subdivisions of the input space into families of polyhedral cones defined by the sets of linear inequalities defining the decision steps in these algorithms. To illustrate, we review why there is a polyhedral description of the UPGMA algorithm here.

\begin{algorithm}[h!]
\caption{UPGMA}
\label{UPGMA}

\begin{itemize}

\item Input: a dissimilarity map $\delta \in \mathbb{R}^{ {n(n-1) / 2}}_{\geq 0}$ on $[n]$.
\item Output: an ultrametric $d \in \mathbb{R}^{ {n(n-1) / 2}}_{\geq 0}$.

\begin{enumerate}
\item Initialize an unordered set partition $\pi_{n} = 1|2| \cdots | n$, and set $\delta^{n} = \delta$.
\item For $i = n-1, \ldots, 1$ do
\begin{itemize}
\item From partition $\pi_{i+1} = \lambda^{i+1}_{1} | \cdots | \lambda^{i+1}_{i+1}$
and distance vector $\delta^{i+1} \in \mathbb{R}^{(i+1)i/2}_{\geq 0}$
choose $j, k$ be so that $\delta^{i+1}(\lambda^{i+1}_{j}, \lambda^{i+1}_{k})$ is minimized.
\item Set $\pi_{i}$ to be the partition obtained from $\pi_{i+1}$ by
 merging $\lambda^{i+1}_{j}$ and $ \lambda^{i+1}_{k}$ and leaving all other
 parts the same. Let $\lambda^{i}_{i} = \lambda^{i+1}_{j} \cup \lambda^{i+1}_{k}$.
\item Create new distance $\delta^{i} \in \mathbb{R}^{i(i-1)/2}_{\geq 0}$ by
$\delta^{i}(\lambda, \lambda') = \delta^{i+1}(\lambda, \lambda')$ if $\lambda, \lambda'$ are
both parts of $\pi_{i+1}$ and
$$
\delta^{i}(\lambda, \lambda^{i}_{i})
= \frac{ |\lambda^{i+1}_{j}|}{ |\lambda^{i}_{i}|} 
\delta^{i+1}( \lambda, \lambda^{i+1}_{j} ) + 
\frac{|\lambda^{i+1}_{k}|}{ |\lambda^{i}_{i}|} 
\delta^{i+1}( \lambda, \lambda^{i+1}_{k} )
$$
otherwise.
\item For each $u \in \lambda^{i+1}_{j}$ and $v \in\lambda^{i+1}_{k}$, 
set $d(u,v) = \delta^{i+1}(\lambda^{i+1}_{j}, \lambda^{i+1}_{k})$.
\end{itemize}
\item Return: A combinatorial rooted tree $T$ with edge weights $w : E(T) \rightarrow \mathbb{R}_{\geq 0}$ and
the ultrametric $d_{T, w}$ realized by $T$.

\end{enumerate}

\end{itemize}

\end{algorithm}

The inequalities coming from the decision steps in UPGMA (Algorithm \ref{UPGMA}) are linear combinations of the original input coordinates, and define a family of polyhedral cones that completely partition the input space. This representation of a polyhedral cone in terms of a set of inequalities is known as an \emph{$H$-representation}. The \emph{$V$-representation} of these cones-i.e. a description in terms of their extreme rays-is given in \cite{Davidson_Sullivant:2013}. Knowledge of both the $H$- and $V$-representations of the cones into which UPGMA divides the input space gives a complete discrete geometric description of the input space, allowing for computations and further conjectures about how this discrete geometric description might affect the performance and behaviors of the algorithm in practice. 

While UPGMA outputs ultrametrics, and so maps $\mathbb{R}^{n \choose 2}_{\geq 0}$ to cones in $\mathcal{ET}_{n}$ indexed by combinatorial rooted trees, Neighbor Joining maps $\mathbb{R}^{n \choose 2}_{\geq 0}$ to cones in $\mathcal{T}_{n}$ indexed by unrooted tree shapes. As is the case with UPGMA, NJ partitions the input space into a family of polyhedral cones with an $H$-representation arising naturally from the decision criterion in Step (2) of Algorithm \ref{UPGMA} which are linear inequalities in terms of the coordinates of the input space. Though the problem has been studied in \cite{NJRays}, a complete description of the $V$-representation, or extreme rays, for the cones in the polyhedral partition of the input space induced by NJ is still unknown.

\subsection{Using polyhedral geometry to analyze distance-based algorithms}
The intersection of the polyhedral decomposition of the input space cones with the piece of the hypersphere $\mathbb{S}_{{n \choose 2}-1}$ in the section of a Euclidean space with all non-negative coordinates, which we denoted as $HS_{\ge 0}$ in Section \ref{additive}, provides a finite measure of how frequently the algorithm returns a particular topology sometimes called the \emph{spherical volume} of the cone. Estimation of spherical volumes of polyhedral cones has been used to study the behavior of distance-based phylogenetic methods on small numbers of taxa. For example, in \cite{Eickmeyer_Huggins_Pachter_Yoshida:2008} estimated spherical volumes were used to assess the agreement of regions in the input space for NJ and the Balanced Minimum Evolution (BME) criterion, which as mentioned above is one of the optimization problems NJ seeks to solve. Also, in \cite{Davidson_Sullivant:2013}, estimated spherical volumes of cones in the input space for UPGMA indicated that UPGMA may be biased against unbalanced tree topologies, and in \cite{Davidson_Sullivant:2014} the notion of spherical volume of small neighborhoods of the input space was further used to investigate the behavior of NJ and UPGMA. These findings assume that observed dissimilarity maps are uniformly distributed when projected on the sphere. 

\subsection{Limitation to the uniform distribution model}
A key biological motivation for properly modeling the distribution of distance data is the so-called \emph{rogue taxa phenomenon} in which the inclusion of a taxon in the estimation of a phylogeny results in reduced accuracy. See \cite{PruningRogues} for an example of a recent method designed to deal with this problem. In \cite{Cueto_Matsen:2011} polyhedral geometry was used to investigate the impact of including an additional taxon on the topology of a BME tree. In a simulation the authors showed that by including distances to an extra taxon and therefore lifting the problem to a higher-dimensional space, one could completely transform the BME topology for the set of taxa corresponding to the original distance matrix in a large number of cases, and that the effect worsened as the number of taxa grew. 

But in \cite{Westover_Rusinko_Hoin_Neal:2013} the effect of adding one additional taxon to a small tree was studied for a viral data set as a biological analog of the simulation study done in \cite{Cueto_Matsen:2011}, which indicates that this effect may have been overstated due to the model of distances used in the simulation in \cite{Cueto_Matsen:2011}. Additional study of this effect using biological data may lead to more speculation about the true frequency of this effect. Motivated by the failure of the uniform distribution to capture the observed biological features we introduce two families of distribution functions in Section~\ref{sec:twomodels} that may better model the dissimilarity maps computed from DNA sequence data.

\section{Modeling dissimilarity maps using \\ statistical distributions and geometry}
\label{sec:twomodels}

\subsection{Two model families for dissimilarity maps}

We investigate two families of models of dissimilarity maps based on the underlying structure of $\mathcal{ET}_{n}$.  To construct our first model we use biological data as input, and fit a distribution function that describes  the individual coordinates of the path trace associated to the dissimilarity map. We refer to this viewpoint as Model Family (1). In the second model of distance data we assume the data has been generated under the Yule-Harding model with noise accounting for the non-ultrametric tree-like features in the data. Precisely, we use a normal random variable to account for the regular Euclidean distance of an input point from the space $\mathcal{ET}_{n}$.  We then compute the induced distribution of the associated coordinates of the path trace. We refer to this as Model Family (2).

\subsection{Using biological data to develop the two model families}

To ensure the comparison of these two models reflects biological data we fit the models to data drawn from the biological literature. TreeBASE \cite{Treebase} is a public, open-access website maintained by the Phyloinformatics Research Foundation, Inc. Data publicly available from TreeBASE is restricted to data associated to publications that have been submitted for peer review.  TreeBASE includes gene, species, and population phylogenies as well as the data used to infer these phylogenies. We used data matrices representing sequence alignments of nucleotide data, inferred distances from these data matrices using the open-source software package MEGA 5 \cite{MEGA}, and then extracted triples of distances from the large distance matrices to create large samples of dissimilarity maps in three dimensions.

In this chapter we highlight the findings for two data sets we obtained from TreeBASE. In \cite{Nuhn_Binder_Taylor_Halling_Hibbett:2013} the authors are testing support for the classification of Boletineae as a sub-order as well as the families Boletaceae and Paxillaceae, and they find support for these using a phylogenetic analysis of three different genes. The paper \cite{Spooner_Rojas_Bonierbale_Mueller_Srivastav_Senalik_Simon:2013} 
uses 9 genes to study the phylogenetic relationships of 29 species all within the same family: 22 of the taxa are in the \emph{Daucus} genus and 7 are from related genera. All 29 species belong to the Umbelliferae family.

\section{Developing Model Family (1)}
\label{sec:model1}
It is equally likely that a data point $\delta$ is closest to any of the three Wings, so $p_W(\Wing \ i ) = \Pr (W= \Wing \ i )= 1/3$, $i \in \{1, 2, 3 \}$. We model $D_{1}$ using a truncated normal distribution, and $D_{2}$ and $D_{3}$ are modeled in Model (1)-(A) using truncated generalized extreme value distributions (EVDs). For a comprehensive introduction to extreme value theory, see \cite{Coles:2001}. 
For Model (1)-(A), we designate the probability density function $f_{D_{3}}$ for $D_{3}$ to be the truncated EVD with parameters $\alpha_s$ and $\beta_s$ on domain $(0, +\infty)$ (the truncation forbids negative values):
$$
f_{D_3} (d_3|\alpha_s, \beta_s) = \frac{e^{\frac{\alpha_s-d_3}{\beta_s}-e^{\frac{\alpha_s-d_3}{\beta_s}}}}{\beta_s c_{\alpha_s,\beta_s}} I_{\{x > 0\}},
$$
where $c_{\alpha_s,\beta_s}$ is the normalizing constant and $I$ is the indicator function. 

Next, $f_{D_2|D_3} (d_2|d_3,\theta)$ follows the resulting distribution obtained by assuming that $r_1=\frac{d_2}{d_3}$ follows an EVD with $\alpha$ and $\beta$ on domain $(0, 1]$, where the truncation is determined by the geometric setting. Note that, conditional on $D_3=d_3$, $f_{D_2|D_3}$ also follows a truncated EVD. We consider $r_1$ first and let $d_2 = r_1 d_3$ so that the estimation of $\alpha$ and $\beta$ does not depend on the actual value of $d_3$. Similarly we define $r_2 = \frac{d_1}{\min(d_3+d_2/\sqrt{2}, \ d_2/2)}$ and model $r_{2}$ by the truncated normal distribution with mean 0 and variance $\sigma^2$ on domain $[-\frac{1}{\sqrt{3}}, \frac{1}{\sqrt{3}}]$. Then $d_1 = r_2 \min(d_3+d_2/\sqrt{2}, \ d_2/2)$ will also follow a truncated normal distribution with independently estimated variance $\sigma^2$.

The EVD distributions in Model (1)-(A) are replaced with Gamma distributions in Model (1)-(B). We chose the distributions for Model (1)-(A) and Model (1)-(B) for our model by fitting the path trace to histograms taken from triples of distances estimated from biological data using MEGA \cite{MEGA}.

\begin{figure}[ht]

 \begin{subfigure}{0.5\textwidth}
 \includegraphics[width=0.9\linewidth]{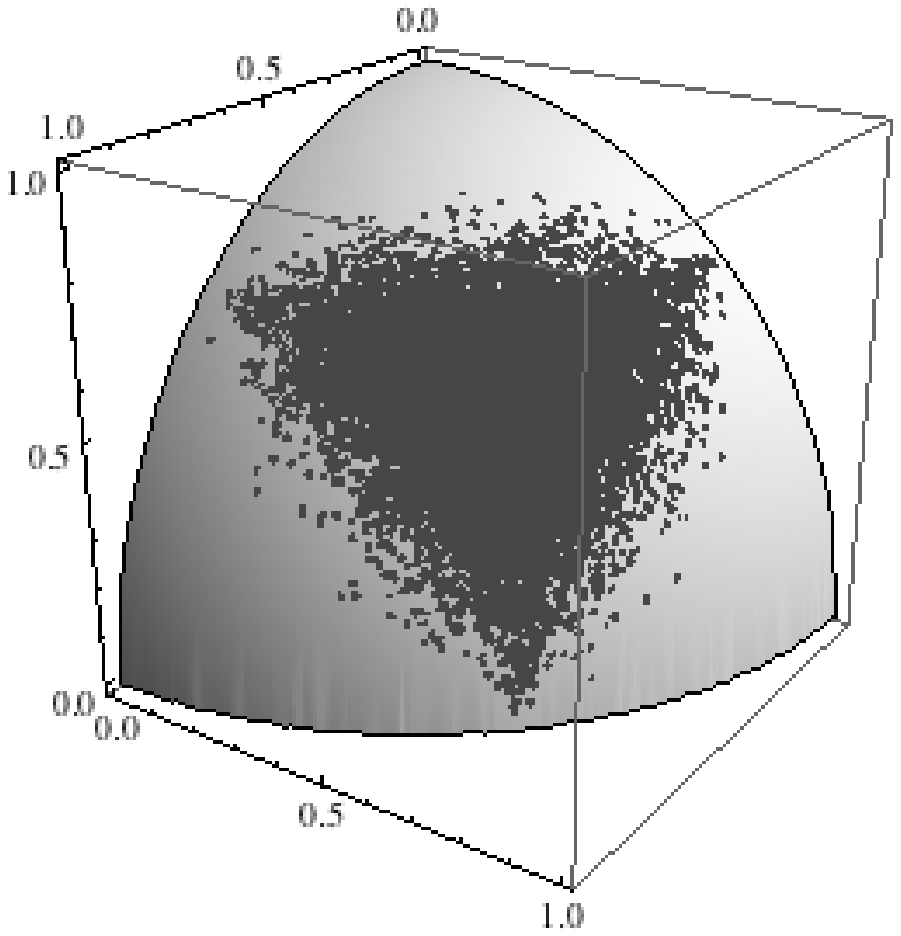}

 \caption{Data points from \cite{Nuhn_Binder_Taylor_Halling_Hibbett:2013}}
\label{fig:BolReal}
 \end{subfigure}
 ~ %add desired spacing between images, e. g. ~, \quad, \qquad, \hfill etc.
 %(or a blank line to force the subfigure onto a new line)
 \begin{subfigure}{0.5\textwidth}
 \includegraphics[width=0.9\linewidth]{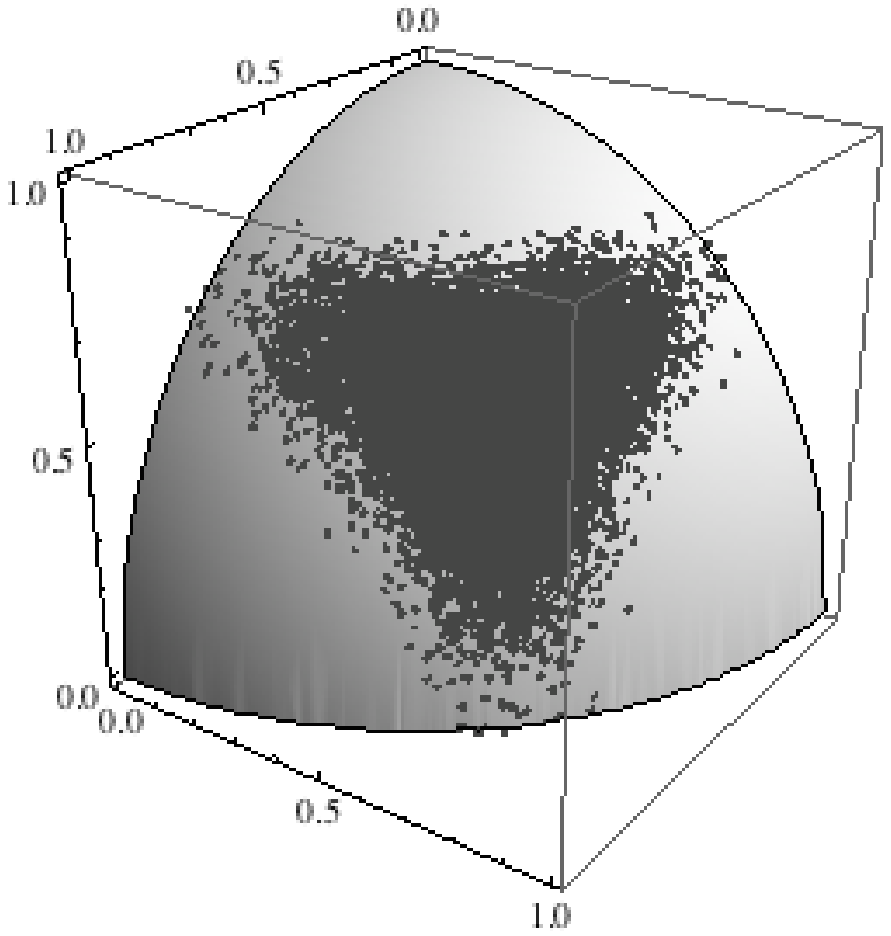}

 \caption{Simulated data (Model (1)-(A) with parameters estimated from \cite{Nuhn_Binder_Taylor_Halling_Hibbett:2013}}
 \label{fig:BolFake}
 ~ %add desired spacing between images, e. g. ~, \quad, \qquad, \hfill etc.
 %(or a blank line to force the subfigure onto a new line)
 \end{subfigure}
 \caption{A comparison of real versus simulated dissimilarity maps.  Data points projected onto the unit sphere.}
\end{figure}

\begin{figure}[ht]
 \centering
 \begin{subfigure}{0.5\textwidth}
 \includegraphics[width=0.9\linewidth]{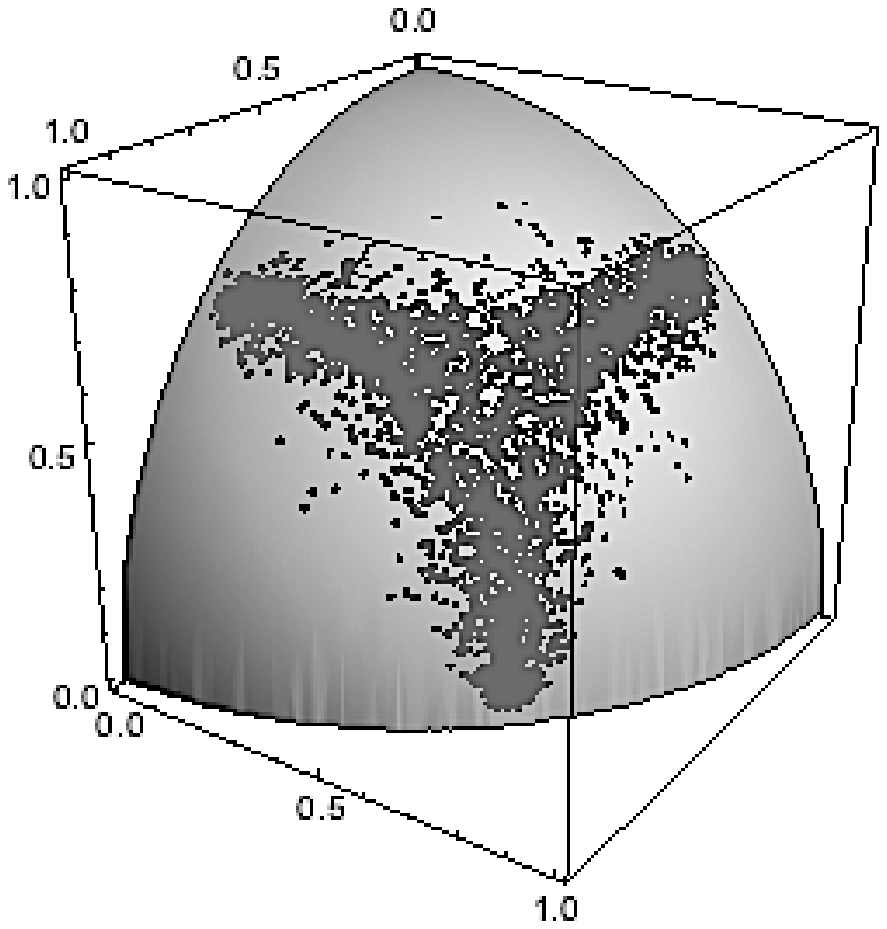}

 \caption{Data points from \cite{Spooner_Rojas_Bonierbale_Mueller_Srivastav_Senalik_Simon:2013}}
 \label{fig:DacReal}
 \end{subfigure}
 ~ %add desired spacing between images, e. g. ~, \quad, \qquad, \hfill etc.
 %(or a blank line to force the subfigure onto a new line)
 \begin{subfigure}{0.5\textwidth}
 \includegraphics[width=0.9\linewidth]{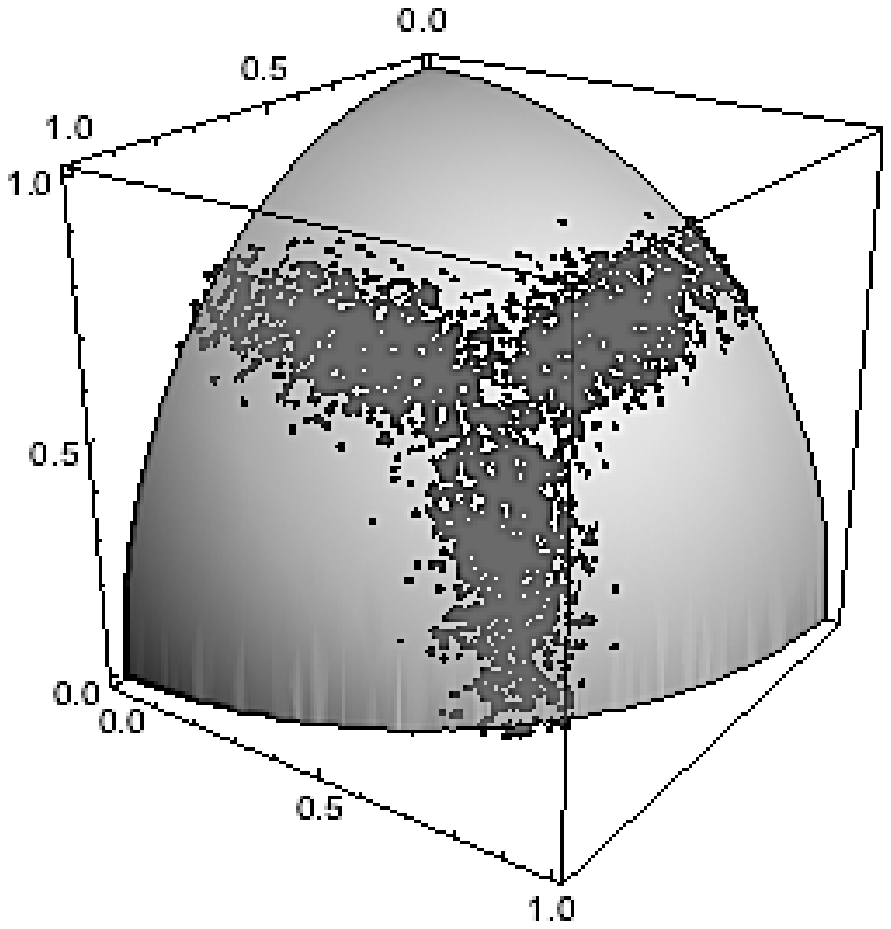}
 \caption{Simulated data (Model (1)-(A) with parameters estimated from \cite{Spooner_Rojas_Bonierbale_Mueller_Srivastav_Senalik_Simon:2013}}
 \label{fig:DacFake}
 ~ %add desired spacing between images, e. g. ~, \quad, \qquad, \hfill etc.
 %(or a blank line to force the subfigure onto a new line)
 \end{subfigure}
 \caption{A comparison of real versus simulated dissimilarity maps.  Data points projected onto the unit sphere.}
\end{figure}

\begin{comment}

\begin{figure}[h!]
\centering

\includegraphics[width=0.5\textwidth]{inputdatapicGray.eps}

\caption{Data points from \cite{Nuhn_Binder_Taylor_Halling_Hibbett:2013}}
\label{fig:BolReal}
\end{figure}

\begin{figure}[h!]
\centering

\includegraphics[width=0.5\textwidth]{modeldatapicGray.eps}

\caption{Simulated data (Model (1)-(A) with parameters estimated from \cite{Nuhn_Binder_Taylor_Halling_Hibbett:2013}}
\label{fig:BolFake}
\end{figure}

\begin{figure}[h!]
\centering

\includegraphics[width=0.5\textwidth]{inputdatapicDaucusGray.eps}

\caption{Data points from \cite{Spooner_Rojas_Bonierbale_Mueller_Srivastav_Senalik_Simon:2013}}
\label{fig:DacReal}
\end{figure}

\begin{figure}[h!]
\centering

\includegraphics[width=0.5\textwidth]{modeldatapicDaucus.eps}

\caption{Simulated data (Model (1)-(A) with parameters estimated from \cite{Spooner_Rojas_Bonierbale_Mueller_Srivastav_Senalik_Simon:2013}}
\label{fig:DacFake}
\end{figure}

\end{comment}

\begin{figure}[h!]
\centering

\includegraphics[width=0.6\textwidth]{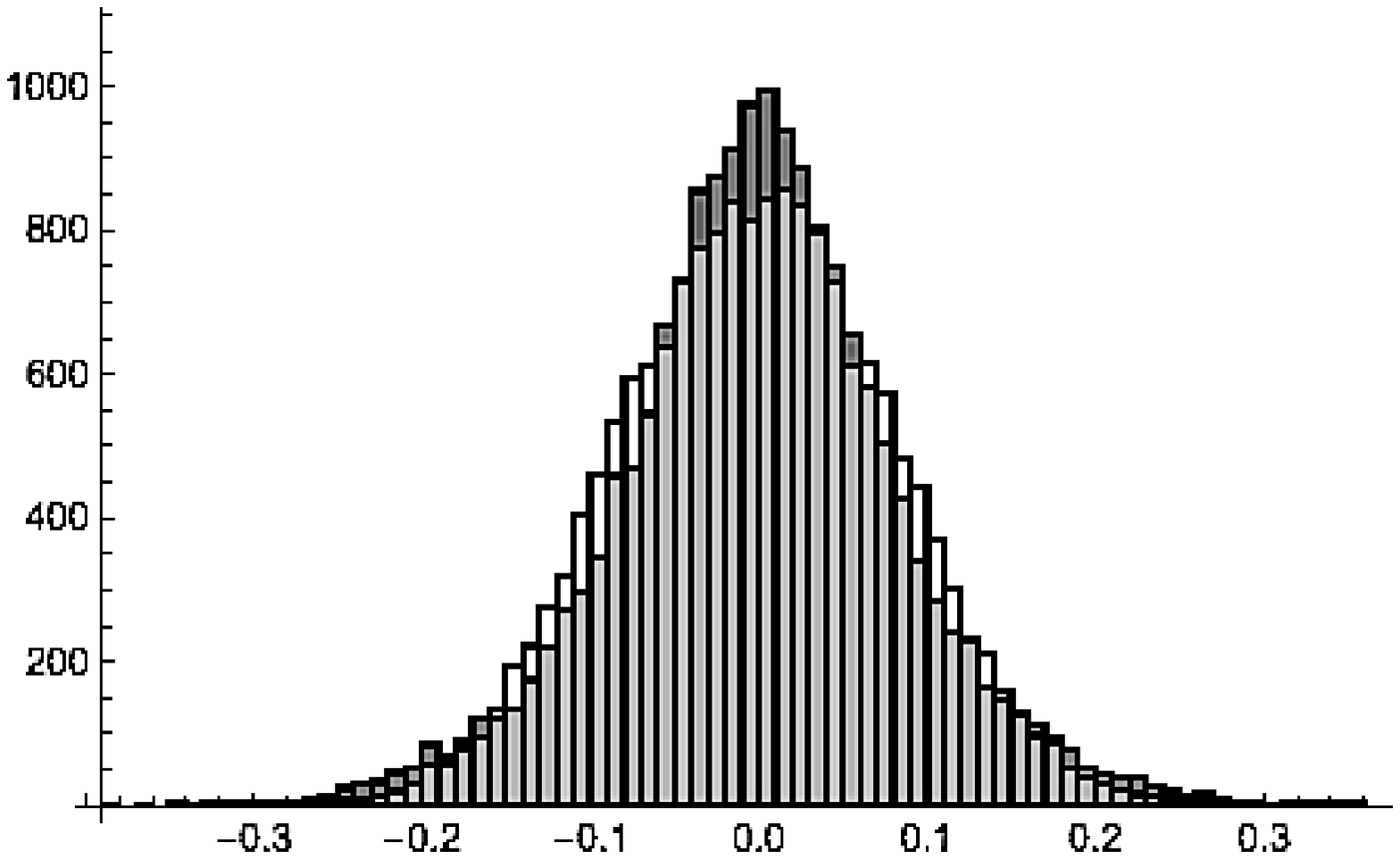}

\caption{Histograms of Real and Simulated Data from Model (1)-(A) $D_{1}| D_{2}, D_{3} $ from \cite{Nuhn_Binder_Taylor_Halling_Hibbett:2013}}
\label{fig:HistD1}
\end{figure}

\begin{figure}[h!]
\centering

\includegraphics[width=0.6\textwidth]{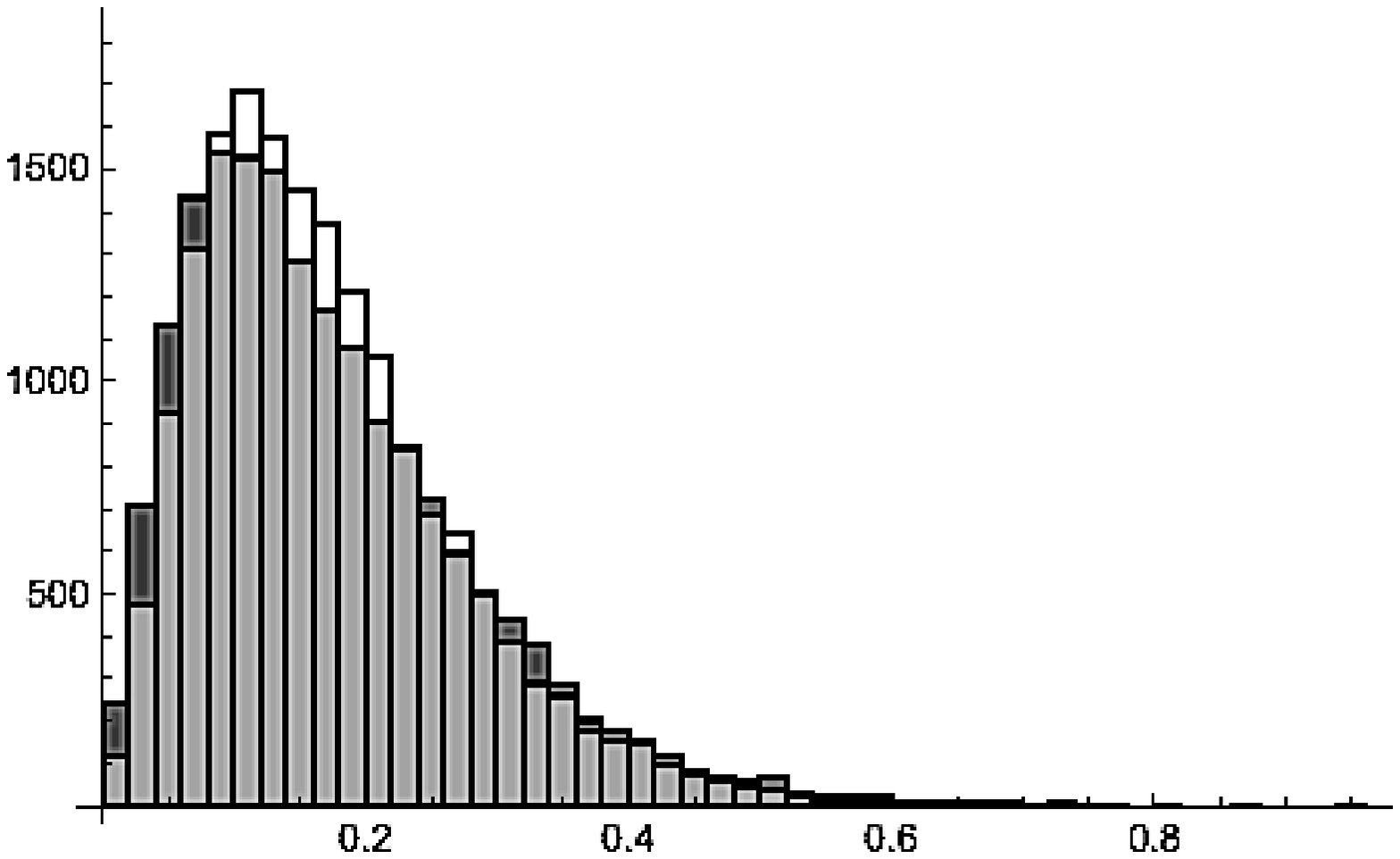}

\caption{Histograms of Real and Simulated Data from Model (1)-(A) $D_{2} | D_{3} $ from \cite{Nuhn_Binder_Taylor_Halling_Hibbett:2013}}
\label{fig:HistD2}
\end{figure}

\begin{figure}[h!]
\centering

\includegraphics[width=0.6\textwidth]{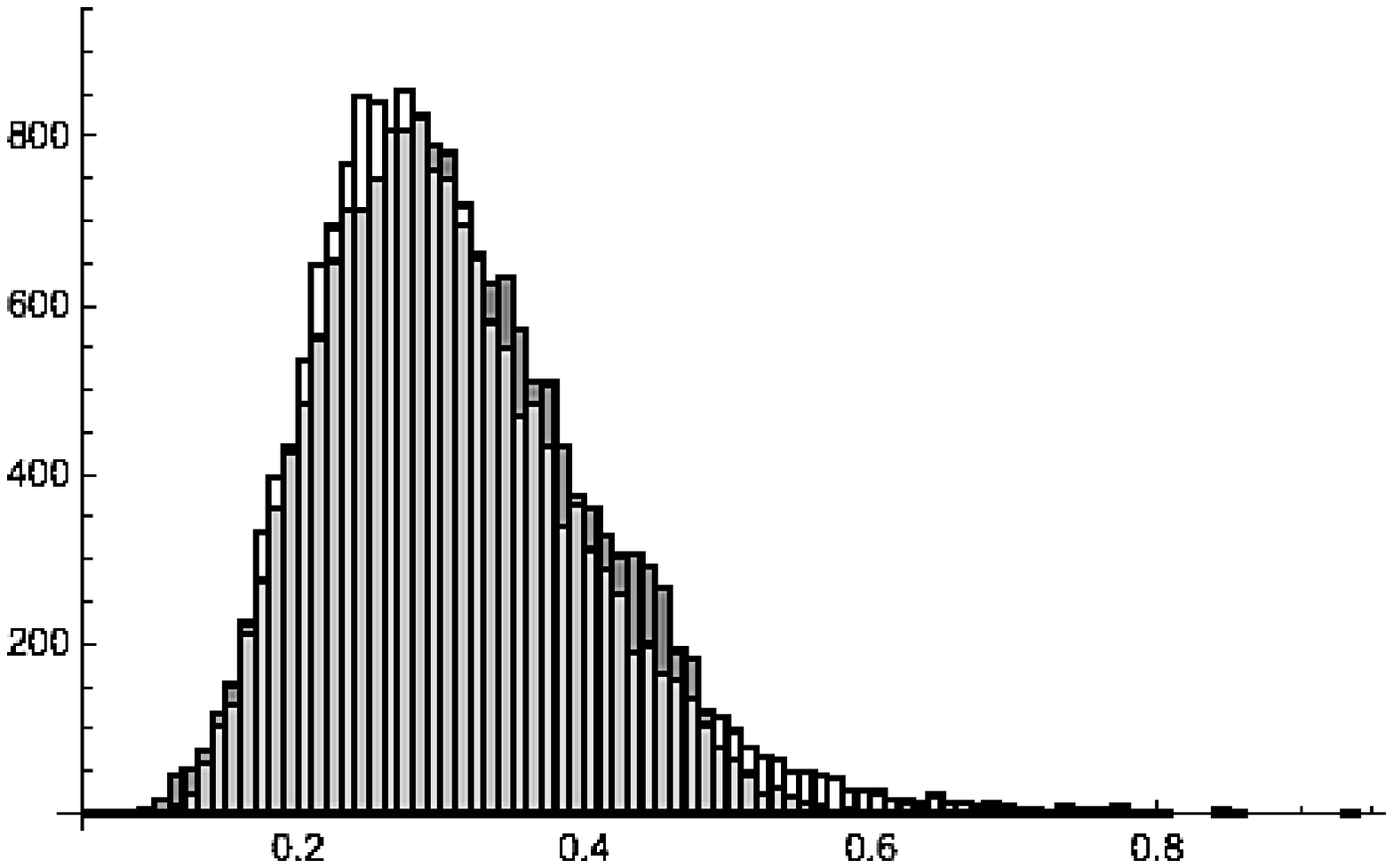}

\caption{Histograms of Real and Simulated Data from Model (1)-(A) $ D_{3} $ from \cite{Nuhn_Binder_Taylor_Halling_Hibbett:2013}}
\label{fig:HistD3}
\end{figure}

\begin{comment}
\subsection{A model for $n> 3$}\label{sec:ngeq3}

For $n > 3$ the space $\mathcal{ET}_{n}$ is still an $(n-1)$-dimensional polyhedral fan and well-understood as a geometric object. In Model (1)-(A) for $n = 3$, each EVD distribution models a move on the path from a data point to the origin that corresponds to a reduction in dimension. An analogous model to the three-dimensional case would perhaps use a normal distribution to model the distance to $\mathcal{ET}_{n}$ from the data point, and a sequence of EVDs to describe a path from the projection of the data point onto $\mathcal{ET}_{n}$ to the origin. We defer this question for future study (see Section \ref{sec:future}).

For $n = 3$, the cone of all arbitrary tree metrics $\mathcal{T}_{n}$ coincides with the cone of all metrics on 3 taxa (see \cite{Semple_Steel:2003}, Exercise 7.8.1), and so has a hyperplane representation given simply by the triangle inequality in the three coordinate permutations: 
$$
\mathcal{T}_{3} = \{ x,y,z | x \leq y + z, y \leq x + z, z \leq x + y \}
$$
For $n > 3$, it might be useful to consider the two output spaces $\mathcal{ET}_{n}$ and $\mathcal{T}_{n}$, as $T_{n}$ is the output space for many distance-based methods such as NJ and BME. Again, we defer this question to future work. 
\end{comment}
\section{Model Family (2)}\label{sec:YH}

To describe Model Family (2), we first describe the Yule-Harding pure birth process (Algorithm \ref{yule_alg}) which generates random rooted, binary phylogenetic trees with leaf set $[n]$. This is also commonly known as the Yule Process but we do emphasize that extinction is often incorporated into the model, in which case it is called a ``birth-death" process. We ignore the ``death" part of the model for simplicity. Recall that a \emph{pendant edge} is an edge in a phylogenetic tree incident to a leaf.

\begin{algorithm}[h!]
\caption{Yule Process}
\label{yule_alg}

\begin{itemize}

\item Input: Leaf set $[n] = \{1, \dots , n \}$, $n \geq 2$. 

\item Output: a rooted, binary phylogenetic $[n]$-tree. 

\item Initialize: Randomly select two leaves $x $ and $y$ from $[n]$ with uniform probability. Identify these two leaves as the leaf set of a rooted binary tree $T_{1}$. Set $S_{1} = [n] \setminus \{x,y \}$ and $L_{1} = \{ x, y \}$. 

\item While $S_{i} \neq \emptyset$:

\begin{itemize}

\item Randomly select an element $x_{i}$ of $S_{i}$ with uniform probability. 

\item Randomly select a pendant edge $e_{i} = (u_{i}, y_{i} ) $ of $T_{i}$ with probability determined by the uniform distribution on the set of pendant edges of $T_{i}$. Here $u_{i}$ is a binary internal vertex and $y_{i} \in L_{i}$. 

\item Subdivide $e_{i}$ by adding a new vertex $v_{i}$. 

\item Update $T_{i + 1}$ as the tree with leaf set $L_{i + 1} = L_{i} \cup \{ x_{i} \} $ , $V(T_{i + 1}) = V(T_{i}) \cup \{ v_{i}, x_{i} \}$ and $E(T_{i + 1}) = E(T_{i}) \setminus \{ e_{i} \} \cup \{ (u_{i}, v_{i}), (v_{i}, y_{i}), (v_{i}, x_{i}) \}$. Update $S_{i + 1} = S_{i} \setminus \{ x_{i} \}$. 
\end{itemize}

\item Return $T_{n -2 + 1}$, a rooted, binary phylogenetic tree with leaf set $[n]$.

\end{itemize}

\end{algorithm}

Given a tree $T$ with $n$ leaves generated under the Yule-Harding pure birth model we investigate the distributions for $D_2$ and $D_3$ for a subtree obtained by restricting $T$ to three randomly selected taxa. We define $D_{1}$ in the same way as for Model Family (1). We determine in this section that $D_2$ and $D_3$ should follow Gamma distributions. Throughout this section we will refer to the \emph{graph} distance in a tree $T$, which is the number of edges in the unique path between a pair of taxa in $T$. 

As mentioned above, dissimilarity maps estimated from DNA sequences is almost never ultrametric, but we use ultrametric trees - i.e. the geometric object $\mathcal{ET}_{3}$ - as we did for Model Family (1) as a reference for constructing Model Family (2). We again consider the perpendicular projection onto $\mathcal{ET}_{3}$ of a dissimilarity map $\delta$, so that $\mathcal{ET}_{3}$ is a baseline object for constructing our model. We consider three cases, which are represented by three labeled tree topologies pictured in Figure \ref{fig:threetrees}. We may again, as in Model Family (1) assume that it is equally likely that a data point $\delta$ is closest to one of the three wings. 

The three topologies are illustrated in Figure \ref{fig:threetrees} with branch lengths labeled, where $a$ and $b$ separate the two most closely related taxa, $c$ separates the single internal node of the tree and the root, and $d$ goes from the root to the third taxa. 
\[\begin{array}{ccc}
\Wing 1& =& (b+c+d,\frac{2a+b+c+d}{2},\frac{2a+b+c+d}{2}) \\
\Wing 2& =& (\frac{a+2b+c+d}{2},a+c+d,\frac{a+2b+c+d}{2})\\
\Wing 3& =& (\frac{a+b+2c+2d}{2},\frac{a+b+2c+2d}{2},a+b)\end{array}\]
Under the assumption that $T$ is ultrametric we can substitute $A=a=b$, $B=c=d-a$. Since these trees are subtrees of an $n$-taxon tree it is important to note that $A$ and $B$ are likely the sum of multiple branch lengths in the larger tree $T$. We can use the projections and corresponding ultrametric coordinates to give $A$ and $B$ as functions of the input data. 
\[\begin{array}{cll}
\Wing 1: & A=\frac{b+c+d}{2} & B=\frac{2a-b-c-d}{4}\\
\Wing 2: & A=\frac{a+c+d}{2} & B=\frac{2b-a-c-d}{4}\\
\Wing 3: & A=\frac{a+b}{2} & B=\frac{2c+2d-a-b}{4}
\end{array}\]

\begin{figure}
\centering
\includegraphics[scale=0.5]{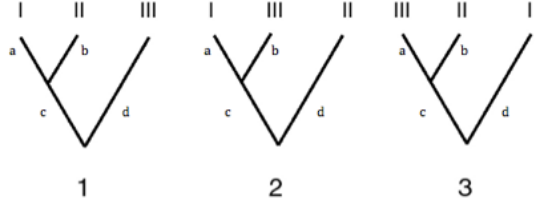}
\caption{Three taxa tree topologies with edge lengths.}
\label{fig:threetrees}
\end{figure}

\subsection{Spindle-origin distance}

\begin{prop}
Under the Yule-Harding model, we may describe the distribution of the set of unscaled distances of the projection from wing to spindle, $D_2$, as follows: $D_2 | \alpha_{D_2}$ has a Gamma distribution with parameters $\frac{2}{\sqrt{3}}\lambda$ and $\alpha_{D_2}$, where $\lambda$ is the birth rate and $\alpha_{D_2}$ has expected value $E[\alpha_{D_2}]=3(\mu_n-2(1-\frac{\mu_n}{n-1}))$ where $\mu_n=\sum\limits_{j=2}^n\frac{1}{j}\approx ln(n)+\gamma+\frac{1}{2n}-\frac{1}{12n^2}$, and $\gamma\approx0.5772156649$ denotes the Euler-Mascheroni constant. 
\end{prop}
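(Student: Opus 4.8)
The plan is to reduce the statement to a question about one branch length in the Yule tree $T$ and then to pin down the shape parameter $\alpha_{D_2}$ combinatorially. First I would exploit that a pure-birth tree is ultrametric: the restriction of $T$ to the three chosen taxa is then an ultrametric triple, so the perpendicular projection of the (noisy) dissimilarity map onto $\mathcal{ET}_3$ lands exactly on one of the three wings, and in the wing coordinates introduced above this projected point has, up to a permutation of coordinates, the form $(2A,\,2A+2B,\,2A+2B)$, where $A$ is the cherry depth and $B=c$ is the internal-edge length of the triple. A direct computation of the Euclidean distance from this point to the spindle $\Span_{\ge 0}\{(1,1,1)\}$ shows that $D_2$ is a fixed positive scalar multiple of $B$ — its residual off the spindle is $\tfrac{2B}{3}(-2,1,1)$, of length $\tfrac{2\sqrt6}{3}B$ — and it is this scalar (together with whatever normalization ``unscaled'' refers to) that converts the rate $\lambda$ below into $\tfrac{2}{\sqrt3}\lambda$. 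So it suffices to determine the law of $B$.

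Next I would note that, inside $T$, $B$ is the tree distance between the most recent common ancestor of the cherry pair and the most recent common ancestor of all three chosen taxa, i.e.\ a sum of lengths of edges of $T$ along the path joining these two vertices. Because each extant lineage in a Yule process carries an independent rate-$\lambda$ exponential clock for its next branching, each of these edge lengths is exponential; treating them as i.i.d.\ $\mathrm{Exp}(\lambda)$ and conditioning on their number $\alpha_{D_2}$ gives $B\mid\alpha_{D_2}\sim\mathrm{Gamma}(\alpha_{D_2},\lambda)$, and rescaling by the geometric constant yields the asserted $D_2\mid\alpha_{D_2}\sim\mathrm{Gamma}(\alpha_{D_2},\tfrac{2}{\sqrt3}\lambda)$.

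It remains to compute $E[\alpha_{D_2}]$. Here I would express $\alpha_{D_2}$ through graph distances in $T$ — equivalently, through the depths (numbers of ancestral branching events) of the three leaves and of the two common ancestors that bound the path — and evaluate the expectations using the coalescent description of the Yule topology: the ranked shape of $T$ is a uniformly random ranked binary tree, so at the branching event that takes the number of lineages from $j+1$ to $j$, a fixed lineage is one of the merging pair with probability $\tfrac{2}{j+1}$ and a fixed pair of lineages is the merging pair with probability $1/\binom{j+1}{2}$. Summing the first of these over the relevant range telescopes to a multiple of $\mu_n=\sum_{j=2}^n\tfrac1j$ (this is exactly why the expected depth of a random leaf equals $2\mu_n$), while the second tracks where along the descent the cherry coalescence and the three-taxon coalescence occur; by exchangeability of the three chosen taxa the cherry is a uniform pair among them. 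Assembling these harmonic sums, with the finite-$n$ bookkeeping of which branching events actually contribute an edge to the path, collapses the computation to $E[\alpha_{D_2}]=3\bigl(\mu_n-2(1-\tfrac{\mu_n}{n-1})\bigr)$, and the displayed expansion of $\mu_n$ is the standard asymptotics of the harmonic numbers.

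The obstacle is twofold. The softer point is that the conditional law of $B$ given $\alpha_{D_2}$ is only approximately Gamma: once one conditions on $T$ having exactly $n$ leaves, the edge lengths along the path are really sums of the independent phase holding times $\tau_j\sim\mathrm{Exp}(j\lambda)$ rather than i.i.d.\ $\mathrm{Exp}(\lambda)$, so one must be explicit that this is a modeling approximation rather than an exact identity. The sharper point — and where the real work lies — is the exact evaluation of $E[\alpha_{D_2}]$: correctly identifying which branching events of $T$ put an edge on the cherry-MRCA-to-root-of-triple path, and then pushing the telescoping harmonic-sum calculation through to the precise closed form, including the $\tfrac{\mu_n}{n-1}$ correction term and the overall constants.
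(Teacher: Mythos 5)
There is a genuine gap here, and it stems from identifying $D_2$ with the wrong geometric quantity. You reduce the whole statement to the law of the internal edge length $B$, computing the residual off the spindle as $\frac{2B}{3}(-2,1,1)$ with norm $\frac{2\sqrt{6}}{3}B = 2\sqrt{2/3}\,B$. That geometry is correct for the wing-to-spindle distance, but its scalar is $2\sqrt{2}/\sqrt{3}$, not the claimed $2/\sqrt{3}$, and you defer the mismatch to ``whatever normalization `unscaled' refers to.'' In fact the quantity the paper analyzes under the name $D_2$ in this proposition is the spindle-to-origin distance, which equals $\frac{2}{\sqrt{3}}(a+b+c+d)$, i.e.\ $\frac{2}{\sqrt{3}}$ times the \emph{total} branch length of the three-taxon subtree $T_0$; the quantity $2\sqrt{2/3}\,B$ that you compute is exactly what the paper treats in the following proposition (for $D_3$). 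The confusion is partly the paper's fault --- the phrase ``from wing to spindle'' in the proposition contradicts both the path-trace definition and the subsection heading ``Spindle-origin distance'' --- but the stated parameters ($\frac{2}{\sqrt{3}}\lambda$ and the harmonic-sum formula for $E[\alpha_{D_2}]$) unambiguously belong to the total-length quantity, so a proof built on $B$ cannot reach them.

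Because of this misidentification, your plan for $E[\alpha_{D_2}]$ --- counting edges of $T$ on the cherry-MRCA-to-triple-MRCA path via coalescent harmonic sums --- targets the wrong edge count and will not collapse to $3\bigl(\mu_n-2(1-\frac{\mu_n}{n-1})\bigr)$; the paper explicitly states that it has no closed form for that internal-path count and estimates the corresponding parameter from data. The step you are missing is the paper's one-line trick: on a three-leaf subtree each edge lies on exactly two of the three leaf-pair paths, so the number of edges of $T$ making up $T_0$ satisfies $\alpha_{D_2} = \frac{1}{2}\bigl(d(\mathrm{I},\mathrm{II})+d(\mathrm{I},\mathrm{III})+d(\mathrm{II},\mathrm{III})\bigr)$ in graph distance; exchangeability of a uniformly chosen triple then gives $E[\alpha_{D_2}]=\frac{3}{2}E[d_{ij}]$, and citing Steel--McKenzie's formula $E[d_{ij}]=2\bigl(\mu_n-2(1-\frac{\mu_n}{n-1})\bigr)$ finishes the computation with no new harmonic-sum work. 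One point in your favor: your caveat that, conditional on $n$ leaves, the inter-branching times of a Yule tree are $\mathrm{Exp}(j\lambda)$ rather than i.i.d.\ $\mathrm{Exp}(\lambda)$, so the conditional Gamma law is a modeling approximation, is a legitimate observation that the paper itself glosses over.
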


\begin{proof}

The sum of $n$ independent exponential random variables with parameter $\lambda$ is a Gamma distribution with parameters $\alpha=n$ and $\lambda$ \cite{LeemisMcQueston}. We use this to show $D_2 | \alpha_{D_2}$ follows a Gamma distribution, and then determine the expected value of $\alpha_{D_2}$ based on the graph distances between the three taxa. 

Suppose we fix a three-taxon subtree $T_{0}$. We compute $D_2$ in terms of the branch lengths of $T_{0}$ by finding the Euclidean norm of the orthogonal projection from the point on the wing to the spindle. We find that the $D_2$ is given by
\[\frac{2}{\sqrt{3}}D,\]
where $D=2A+2(A+B)=a+b+c+d$. We note that $D$ is the sum of the branch lengths of $T_{0}$. Let the number of branches included in $D$ be $\alpha_{D_2}.$ Under the Yule-Harding model branch lengths follow an exponential distribution with parameter $\lambda$. Thus, $D$ is the scalar multiple of the sum of exponential branch lengths, which implies $D$ has distribution $\Gamma(\lambda, \alpha_{D_2})$ and $D_2$ follows a Gamma distribution under the Yule-Harding pure birth model. Notice that here $\alpha_{D_2}$ is a constant given the fixed subtree.

Then we see that $D=a+b+c+d = \frac{1}{2}(d(\mbox{I}, \mbox{II}) + d(\mbox{I}, \mbox{III}) + d(\mbox{II}, \mbox{III}))$, which implies $E[D] = \frac{1}{2}(E[d(\mbox{I}, \mbox{II}) ]+ E[d(\mbox{I}, \mbox{III})] + E[d(\mbox{II}, \mbox{III})])$. Since the subtree is randomly chosen, we have $E[D] = \frac{3}{2} E[d(\mbox{I}, \mbox{II})]$. Therefore $E[D] = E[E[D|\alpha_{D_2}]] = E[\alpha_{D_2}/\lambda] = E[\alpha_{D_2}]/\lambda = \frac{3}{2} E[d_{ij}]/\lambda$, i.e. $E[\alpha_{D_2}]= \frac{3}{2} E[d_{ij}]$, where $E[d_{ij}]$ is the expected graph distance between a random subset of two taxa in an Yule tree with $n$. Steel and McKenzie \cite{SteelMcKenzie} showed 
\[E[d_{ij}]=2\left(\mu_n-2\left(1-\frac{\mu_n}{n-1}\right)\right).\] 
Therefore \[E[\alpha_{D_2}] = 3\left(\mu_n-2\left(1-\frac{\mu_n}{n-1}\right)\right).\]

Additionally we note that the scalar multiple $kX$ of a random variable $X$ that follows a Gamma distribution with parameters $\alpha_{D_2}$ and $\lambda$ is also a Gamma distribution with parameters $\alpha_{D_2}$ and $k\lambda$. Thus we expect the spindle-to-origin distance to be $\Gamma\left(\alpha_{D_2},\frac{2}{\sqrt{3}}\lambda\right)$. 
\end{proof}

\subsection{Wing-spindle distance}

 \begin{prop}
Under the Yule-Harding model, the distribution of the set of distances of the projection from wing to spindle can be described as follows: $D_3 | \alpha_{D_3}$ has a Gamma distribution with parameters $2\sqrt{\frac{2}{3}}\lambda$ and $\alpha_{D_3}$, where $\lambda$ is the birth rate. 
\end{prop}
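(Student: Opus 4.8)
The plan is to follow the same three-step template used in the proof of the preceding proposition: express $D_3$ as an explicit positive scalar multiple of a single branch-length quantity of the chosen three-taxon subtree, recognize that quantity as a sum of $\alpha_{D_3}$ independent exponential$(\lambda)$ branch lengths under the Yule--Harding model (hence Gamma-distributed), and then push the scalar through using the scaling property of the Gamma family already recorded above. Since coordinate permutations act on $\mathbb{R}^{3}$ by permuting the three wings of $\mathcal{ET}_{3}$ while fixing the spindle, it suffices to run the computation on a single wing, and I would take $\Wing 3$, whose coordinates are written out in the table preceding Figure \ref{fig:threetrees}.

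For the geometric step, let $p$ be the point on $\Wing 3$, so $p = \left(\frac{a+b+2c+2d}{2},\frac{a+b+2c+2d}{2},a+b\right)$; write $p=(u,u,v)$. Its orthogonal projection onto the spindle $\Span_{\ge 0}\{(1,1,1)\}$ is $\frac{2u+v}{3}(1,1,1)$, so the residual vector is $\frac{u-v}{3}(1,1,-2)$ and $D_3 = \|p-\mathrm{proj}\| = |u-v|\sqrt{\frac{2}{3}}$. Since $u-v = \frac{2c+2d-a-b}{2} = 2B$, where $B$ is the ultrametric coordinate from the table, this gives $D_3 = 2\sqrt{\frac{2}{3}}\,B$, and by the wing symmetry the same identity holds on $\Wing 1$ and $\Wing 2$.

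For the distributional step, recall that under the Yule--Harding generating assumption the point $p$ is the perpendicular projection of the input $\delta$ onto $\mathcal{ET}_{3}$ and the noise lives entirely in the $D_1$-direction, so $B$ is determined by the ultrametric three-taxon tree inherited from the Yule tree $T$; imposing ultrametricity ($a=b$, $c=d-a$) collapses $B=\frac{2c+2d-a-b}{4}$ to $B=c$, the internal edge of the reduced tree. Pulled back to $T$, this edge is the path from the most recent common ancestor of the close pair to the most recent common ancestor of all three taxa; let $\alpha_{D_3}$ be the number of edges of $T$ on that path. Under Yule--Harding those branch lengths are independent exponential$(\lambda)$ variables, so $B\mid\alpha_{D_3}$ is a sum of $\alpha_{D_3}$ of them and hence has distribution $\Gamma(\lambda,\alpha_{D_3})$; applying the scaling rule from the previous proof (if $X\sim\Gamma(\lambda,\alpha)$ then $kX\sim\Gamma(k\lambda,\alpha)$) to $D_3 = 2\sqrt{\frac{2}{3}}\,B$ yields $D_3\mid\alpha_{D_3}\sim\Gamma\bigl(2\sqrt{\tfrac{2}{3}}\,\lambda,\ \alpha_{D_3}\bigr)$, as claimed.

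The step I expect to be delicate is the identification of $B$ with a genuine nonnegative sum of branch lengths: one has to argue that the negative coefficients in the raw expression $B=\frac{2c+2d-a-b}{4}$ really do cancel under ultrametricity, and that the resulting single ``super-edge'' of the three-taxon tree corresponds to a well-defined multi-edge path of $T$ whose lengths are precisely the independent exponentials supplied by the Yule--Harding model --- exactly the analogue of the observation that $D=a+b+c+d$ counts branches of $T_0$ in the $D_2$ argument. If one wanted the full analogue of that argument one would next compute $E[\alpha_{D_3}]$, which requires the expected number of edges between those two common ancestors in a random Yule tree (a relative of the Steel--McKenzie count used above); the proposition as stated only asks for the conditional law, so I would stop once the Gamma distribution is in hand.
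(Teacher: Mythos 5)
Your proposal is correct and follows essentially the same route as the paper: the paper likewise writes the wing-to-spindle distance as $2\sqrt{\tfrac{2}{3}}\,B$, observes that for a fixed three-taxon subtree $B$ is a sum of independent exponential$(\lambda)$ branch lengths under Yule--Harding (hence Gamma), and applies the scaling rule established in the preceding proposition. Your explicit orthogonal-projection computation and the identification of $B$ with the internal path of $T$ simply fill in steps the paper leaves implicit, and your closing remark about estimating $E[\alpha_{D_3}]$ matches the paper's own observation that this would require the Steel--McKenzie-type expected depth of the internal node.
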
 

We use similar argument to show $D_3 | \alpha D_{3}$ also follows a Gamma distribution. The Euclidean norm of the projection from the spindle to the origin is $2\sqrt{\frac{2}{3}}B$. For a fixed subtree under the Yule-Harding pure birth model, $B$ is the sum of exponential random variables and therefore follows a Gamma distribution. 

In order to determine the expected value for the shape parameter we need to know the expected graph distance in $T$ between the unique internal node and the root in $T_{0}$. Steel and McKenzie \cite{SteelMcKenzie} gave the expected value for the distance from the unique internal node of two taxa to the root in $T_{0}$. It is possible that a refinement of this work could lead to an explicit formula for the expected value of the shape parameter in terms of the number of taxa. But without such an explicit formula it is simpler to estimate the parameter directly from the data.

\section{Data analysis}
\subsection{Method for scoring the deviation of a model distribution}\label{sec:score}
Given a distribution function for pairwise distance data, one might wish to measure how well it conforms with the data it was designed to model.

In this section we present a method for scoring a model distribution's deviation from a data sample on $HS_{\ge 0}$. We use this method to compare how well our model families, as well as the uniform distribution, model a test data set. We use a numerical test statistic motivated by the assumption that for large samples, the proportion of the sample in any region of $HS_{\ge 0}$ should be the same as the integral of the proposed density function over that region.

Let $S$ be a sample of $N$ triples of pairwise distance data which has been orthogonally projected onto $HS_{\ge 0}$. Let $P$ be a random sample of triples of distance data drawn from a distribution function $\gamma$ and then projected onto the unit sphere. We define a test statistic to determine how different $S$ and $P$ are.

We select $k$ randomly generated points $v_i$ on the $HS_{\ge 0}$. Each such point divides $HS_{\ge 0}$, into three spherical triangles $$ T_1 =\{v_i,(1,0,0),(0,1,0) \}, $$ $$ T_2 =\{v_i,(1,0,0),(0,0,1) \}, \quad \text{and} \quad T_3 =\{v_i,(0,1,0),(0,0,1) \}.$$

For each $1 \le i \le k$ we define the triples $\hat{S}$ (resp $\hat{P}$) = $(s_1,s_2,s_3)$ where ${s_j= \frac{|S \cap T_j|}{|S|}}$. Then we can define the test statistic:
\begin{equation} \rho(N,k) = \frac{\displaystyle\sum_{i=1}^{k} \sqrt{(s_1-p_1)^2+(s_2-p_2)^2+(s_3-p_3)^2}}{k}. \end{equation} For large values of $N$ and $k$, $\rho$ should approach zero.

Note that if a density function $f$ was known for $\gamma$, then the $p_i$ could be computed by integrating $f$ along $T_1,T_2$, and $T_3$. If $f$ is the density function for the uniform distribution, then as $n \to \infty$, $\hat{P}$ will approach \[ \left(\frac{Area(T_1)}{\frac{\pi}{6}} ,\frac{Area(T_2)}{\frac{\pi}{6}} ,\frac{Area(T_3)}{\frac{\pi}{6}} \right).\]

\subsection{Fitting data sets from Treebase.org to Model Families (1) and (2)}

We used custom Mathematica software available on our supplementary materials website to test the ability of our different model families to fit biological data sets. The results for ten data sets are shown in Table \ref{table:modelscores}. Rows are indexed by the ten data sets and the entries in the tables are the score based on our geometric test statistic explained in Section \ref{sec:score}.

We explain the labels for the columns: Models (1)-(A) and (1)-(B) are from Model Family (1) where (1)-(A) indicates the choice of the extreme value distribution and (1)-(B) indicates the choice of the Gamma distribution.  Uniform is the model where input points are uniformly distributed in the input space. Lower scores are best, as explained in Section \ref{sec:score}, and the best score for each data set is in bold text. Rows are indexed by the data matrix name on Treebase.org: M18755 is the data matrix we used from \cite{Nuhn_Binder_Taylor_Halling_Hibbett:2013}, and we were no longer able to find the study \cite{Spooner_Rojas_Bonierbale_Mueller_Srivastav_Senalik_Simon:2013} in Treebase.org by the time of this publication, so we have made the distances we calculated from that data matrix available in our online supplementary materials. The row corresponding to that data matrix is labeled ``Daucus."

In Table \ref{table:modelscores}, for each data set, we used $k = 10000$ sample points for the area subdivisions in our scoring system. For small data matrices, we used all possible triples of input points $(\delta(u,v), \delta(u,w), \delta(v,w))= (x,y,z)$, i.e. $N = {n \choose 3}$ where $n$ was the number of taxa in the matrix. For data matrices with 50 or more taxa, we used approximately $N = 20,000$ data points, where $N$ varied slightly because our method for random sampling of points $(\delta(u,v), \delta(u,w), \delta(v,w))$ produced duplicates. 

While the results in Table \ref{table:modelscores} are interesting, and definitely show that for these datasets, the uniform distribution perform poorly, analysis of many more data sets would be necessary to establish a clear trend.  It also appears that Model (1)-(B) which blends the geometric features of Model Family (1), with the choice of Gamma distribution motivated by Model Family (2), may provide the best overall accuracy.

\begin{table}[!ht]
\begin{center}
 \begin{tabular}{|c|c|c|c|c|}
 \hline
 Data Matrix & Model (1)-(A) & Model (1)-(B) &  Model 2 & Uniform \\
 \hline
 M1807 & 0.0208444 & 0.0276282  & \textbf{0.01446} & 0.0892374 \\
 \hline
 M1789 & \textbf{0.00944452} & 0.00961268  & 0.0154542 & 0.26513 \\
 \hline
 M2566 & 0.0624382 & \textbf{0.04826421} &  0.0736874 & 0.060538 \\
 \hline
 M25733 & 0.0140685 & \textbf{0.0131326} &  0.0334055 & 0.264626 \\
 \hline
 M18755 & 0.00826286 & \textbf{0.00819624} &  0.0134024 & 0.293712 \\ 
 \hline
 M25665 & 0.0141961 & \textbf{0.0141181}  & 0.0276884 & 0.237157 \\ 
 \hline
 Daucus & 0.00879329 & \textbf{0.00672433} & 0.0148475 & 0.294249 \\ 
 \hline 
 M536 & 0.0251412 & \textbf{0.019827} & 0.0215093 & 0.130195 \\
 \hline 
 M806 & \textbf{0.0282066} & 0.0331788 & 0.0359551 & 0.153135 \\ 
 \hline
 M1169 & \textbf{0.0115309} & 0.0122847  & 0.0177641 & 0.328366 \\ 
 \hline
 \end{tabular}
 \caption{Scores $\rho(N,k)$ from each model for data sets with Jukes-Cantor Distances from Treebase.org. Model of best fit is in bold.}
\label{table:modelscores}
\end{center}
\end{table}

\begin{comment}
\begin{table}[!ht]
\begin{center}
 \begin{tabular}{|c|c|c|c|c|c|c|}
 \hline
 Data Matrix & Model (1)-(A) & Model (1)-(B) & Model (2)-(A) & Model (2)-(B) & Model 2-(C) & Uniform \\
 \hline
 M1807 & 0.0208444 & 0.0276282 & 0.0543425 & 0.0183246 & \textbf{0.01446} & 0.0892374 \\
 \hline
 M1789 & \textbf{0.00944452} & 0.00961268 & 0.0907933 & 0.0114051 & 0.0154542 & 0.26513 \\
 \hline
 M2566 & 0.0624382 & 0.0482642 & 0.075624 & \textbf{0.0447585} & 0.0736874 & 0.060538 \\
 \hline
 M25733 & 0.0140685 & \textbf{0.0131326} & 0.0624573 & 0.033226 & 0.0334055 & 0.264626 \\
 \hline
 M18755 & 0.00826286 & \textbf{0.00819624} & 0.09597 & 0.0134739 & 0.0134024 & 0.293712 \\ 
 \hline
 M25665 & 0.0141961 & \textbf{0.0141181} & 0.0860527 & 0.0152648 & 0.0276884 & 0.237157 \\ 
 \hline
 Daucus & 0.00879329 & \textbf{0.00672433} & 0.0927675 & 0.013611 & 0.0148475 & 0.294249 \\ 
 \hline 
 M536 & 0.0251412 & 0.019827 & 0.0423875 & \textbf{0.0196652} & 0.0215093 & 0.130195 \\
 \hline 
 M806 & \textbf{0.0282066} & 0.0331788 & 0.035534 & 0.0365494 & 0.0359551 & 0.153135 \\ 
 \hline
 M1169 & \textbf{0.0115309} & 0.0122847 & 0.0871781 & 0.0142438 & 0.0177641 & 0.328366 \\ 
 \hline
 \end{tabular}
 \caption{Scores from each model for data sets with Jukes-Cantor Distances from Treebase.org.}
\label{table:modelscores}
\end{center}
\end{table}
\end{comment}

\section{Motivations: choice of distributions in Model Families (1) and (2)}\label{sec:distributions}

The variance of the normal distribution parametrizes $D_{1}$ in both model families (1) and (2). A value of $d_{1} = 0$ means that a data point is in $\mathcal{ET}_{3}$, and so is an ultrametric additive distance matrix. Therefore larger $d_{1}$ should correspond to higher variations in the rates of evolution between the three taxa for the data point; i.e. deviation from the molecular clock. The normal distribution is thus a natural choice for both model families. 

We first chose the EVD to model $D_{2}$ and $D_{3}$ because histograms of the corresponding steps in path traces of the real data points closely resembled the probability density function for the EVD. Figures \ref{fig:HistD1}, \ref{fig:HistD2}, and \ref{fig:HistD3} show the histograms of data points of the type $( \delta(u,v), \delta(u,w), \delta(v,w) )$ plotted together with histograms of data points generated using Model (1)-(A). However, since both the Gamma and EVD are transformations of the exponential distribution, \cite{LeemisMcQueston}, and because of our results from Section \ref{sec:YH}, we felt it was necessary to include Model (1)-(B). 

Our model families seem to be capable of capturing rather subtle features of the biological data beyond how far the data deviate from the molecular clock. We observed that in some datasets, such as in \cite{Spooner_Rojas_Bonierbale_Mueller_Srivastav_Senalik_Simon:2013}, data points are less distributed in the center (i.e. around the spindle). This phenomenon is visible in Figures \ref{fig:DacReal} and \ref{fig:DacFake}. This means that the distribution of $D_2$ has lower density when $D_2$ is close to 0 and near the boundary. We comment on this feature for this data set because the density functions of the EVD and Gamma distributions can accommodate this feature of the data with appropriate parameter choices in both Model Families (1)-(A) and (1)-(B), while the density function for the exponential distribution on branch lengths used in Model Family (2) cannot. 

\section{Non-ultrametric trees and application to larger data sets}
The model families described below emphasize the geometry of ultrametric trees ($\mathcal{ET}_3$).  However, a similar analysis could be conducted relative to the more general tree space $\mathcal{T}_{n}$.  With only three taxa, every dissimilarity map that satisfies the triangle inequality is already realizable as a pairwise distance on a tree \cite{SempleSteel}.  Therefore, it is more interesting to consider the relationship between  maps and the more general tree space in the context of $\mathcal{T}_4$. 

The geometry of $\mathcal{T}_{4}$  can be described as the union of three \emph{superwings} $$W_{14}=\{P\in \mathbb{R}^6|x_12+x_34=x_{13}+x_{24} \le x_{14}+x_{23} \},$$ $$W_{13}=\{P\in \mathbb{R}^6|x_12+x_34=x_{14}+x_{23} \le x_{13}+x_{24} \},$$ and $$W_{12}=\{P\in \mathbb{R}^6|x_13+x_24=x_{14}+x_{23} \le x_{12}+x_{34} \}.$$ While this model begins with an emphasis on general tree space, we shift our emphasis to the condition of ultrametricity through a projection to $\mathcal{ET}_{3}$ at a later stage. 

 In the earlier sections of this chapter we assumed our biological data arose from the addition of Gaussian noise to phylogenetic trees identified as points in Euclidean space.  We continue with this paradigm and model the distance of a dissimilarity map to the nearest superwing using a normal distribution.  Next we project the image of the point on the superwing to the \emph{superspindle} or set of coordinates with $x_{12}+x_{34}=x_{13}+x_{24}=x_{14}+x_{23}$. This distance could be modeled by a Gamma distribution.

As the sum of branch lengths under the Yule-Harding model we can assume $x_{12}+x_{34}$ follow a Gamma distribution. By knowing this sum we can reconstruct all coordinates using only $x_{12},x_{13}$, and $x_{23}$. These triples of coordinates lie in $\mathbb{R}^3$ and correspond to triples of pairwise distances drawn from a tree on $n$ taxa and can thus be modeled using the findings in Section~\protect\ref{sec:twomodels}.

Moreover, when there are more than four taxa the geometry of tree space is simply the intersection of superwings over all sets of four taxa. This follows directly from Buneman's four-point condition (or three point condition in the ultrametric setting) \protect\cite{buneman1974note}.  While it is possible to build more complicated distributions relating dissimilarity maps to $\mathcal{T}_{n}$ or $\mathcal{ET}_{n}$ such models would require a larger set of parameters, and may necessarily contain more information than the minimal cases.  However, the following question is paramount for turning the theoretical work of this chapter into a practical tool for computational biologists.
\begin{ques} How does a distribution of dissimilarity maps on subsets of $3$ (or $4$) of $n$ taxa induce a distribution of dissimilarity maps on $n$ taxa?
\end{ques}

\section{Applications and future work}\label{sec:future}
The models in this chapter provide new tools for interpreting analytic results about distance-based phylogenetic methods using geometry. Analyses of trends in the shapes of trees using notions such as balance statistics has led to interesting insights and questions about model assumptions in the past \cite{Aldous}.  The tools presented in Sections \ref{sec:model1} and \ref{sec:YH} establish a framework for evaluating suitability of model assumptions for phylogenetic data. Recently, the TreeBASE website (accessed June 15, 2016) stated: ``as of April 2014, TreeBASE contained data for 4,076 publications written by 8,777 different authors" \cite{Treebase}. A systematic analysis of more alignment matrices using the software available in our supplementary materials might give a clearer picture of trends in biological data that relate the model families presented here, providing empirical support for choosing one model family as more suitable for further development and study. 

Our use of the geometric object $\mathcal{ET}_{3}$ as the backbone of our models serves many purposes. For example, constructing Model Families (1) and (2) around $\mathcal{ET}_{n}$ provides a unifying framework to study which Model Family is better suited to modeling biological data.  The Yule-Harding pure birth process does produce ultrametric trees, which are points in $\mathcal{ET}_{n}$.  So, comparing the Model Families (1) and (2) in this geometric setting may provide insight into the implications of assuming the Yule-Harding model in data simulation.  

Furthermore, the Mathematica software in our supplementary materials can be used directly to estimate the deviation of a biological data set from the molecular clock, via the amount of variance in the normal distribution parameter estimated as $D_{1}$.  We have already called attention to the fact that biological data is rarely ultrametric, but in the biology and computer science communities, especially in the field of phylogenomics, the assumption of a molecular clock is still necessary to provide theoretical guarantees in many instances. For example, in \cite{RochWarnow} it is shown that without the assumption of a molecular clock, no theoretical guarantees bounding gene tree estimation error can be made. 

Also, some phylogenomic methods that bypass the need for gene tree estimation rely on the assumption of a molecular clock for their theoretical guarantees. For example, the recently developed method SVDquartets \cite{SVDquartets} relies on the identifiability result in \cite{ChifmanKubatko14} for theoretical guarantees, and requires a molecular clock. So, we argue that establishing the deviation of a data set from the molecular clock using our software would be useful in determining whether or not it is reasonable to apply a method that is only guaranteed to work well on data that follows a molecular clock to that specific data set. 

The results shown in this chapter are restricted to distances computed under the Jukes-Cantor model of sequence evolution. We performed the same experiments with the K2P model, but the results on the data sets we investigated were so similar for each data set that we omitted these findings. However, future work should include an analysis of both distances computed from amino acid data as well as more general statistical models of sequence evolution such as GTR \cite{GTR} and GTR+$\Gamma$ \cite{GTRGAMMA}. 

\subsection{Future Research Questions}

\begin{ques} Is there a correlation between the number of missing taxa in a tree and the parameters of the Gamma distribution in Model Family (2)? How does this shed light on how closely a set of sampled species may be related in the Tree of Life? 
\end{ques}

\begin{ques} What is the likelihood of the rogue taxa phenomenon using Model Family (2), which more naturally extends to $n > 3$ than Model Family (1)? \end{ques}

\begin{ques} Can one provide an inductive and computationally tractable model for the distribution of distance data for pairwise dissimilarity vectors on $n$ taxa for Model Family (1)? If so, how will this model compare to Model Family (2) \end{ques}

\begin{ques} For sequence data generated under the coalescent model, is there a correlation between the estimated Gamma distribution parameters across the different gene trees? Could the variation among these parameters inform the design of summary methods? \end{ques}

\section{Supplementary material}\label{sec:supp}
All supplementary materials, including the distance matrixes computed from the alignment data matrices downloaded from Treebase.org, and Mathematica software for generating simulated distance data, can be found at goo.gl/08PUC5.

\section{Acknowledgements}
R.D. was partially supported by the National Science Foundation (DMS 0954865). J.R. was partially supported by grants from the National Center for Research Resources (5 P20 RR016461) and the National Institute of General Medical Sciences (8 P20 GM103499) from the National Institutes of Health. Z.V. was partially supported by the National Institute of General Medical Sciences (8 P20 GM103499) from the National Institutes of Health. J. X. was partially supported by the David and Lucille Packard Foundation. We also thank James Degnan, Megan Owen, Mike Steel, Seth Sullivant, Caroline Uhler, Tandy Warnow, and Ruriko Yoshida for helpful discussions over the course of this project.

%\section{}
%\subsection{}

% Ordinary theorem and proof
%\begin{theorem}[Optional addition to theorem head]
% text of theorem
%\end{theorem}

%\begin{proof}[Optional replacement proof heading]
% text of proof
%\end{proof}

% Unnumbered equation
%\begin{equation*}
%\end{equation*}

% Aligned equations
%\begin{align}
 %& \\
% &
%\end{align}

%-----------------------------------------------------------------------
% End of article-template.tex
%-----------------------------------------------------------------------

% Bibliographies can be prepared with BibTeX using amsplain,
% amsalpha, or (for "historical" overviews) natbib style.
\bibliographystyle{amsplain}
% Insert the bibliography data here.

%\bibliographystyle{natbib}%%%%natbib.sty
\bibliography{refs.bib}%%%refs.bib

\end{document}